\newtheorem{theorem}{Theorem}[section]
\newtheorem{proposition}[theorem]{Propostion}
\providecommand{\Omicron}{\mathrm{O}}
\title{Solving Shisen-Sho boards}
\author{Michiel de Bondt}
\begin{document}

\maketitle

\begin{abstract}
We give a simple proof of that determining solvability
of Shisen-Sho boards is NP-complete. Furthermore, we
show that under realistic assumptions, one can compute 
in logarithmic time if two tiles form a playable pair.

We combine an implementation of the algoritm to test
playability of pairs with my earlier algorithm to solve 
Mahjong Solitaire boards with peeking, to obtain an
algorithm to solve Shisen-Sho boards. We sample 
several Shisen-Sho and Mahjong Solitaire layouts
for solvability for Shisen-Sho and Mahjong Solitaire.
\end{abstract}

\section{Introduction}

Shisen-Sho and Mahjong Solitaire are games in which the object is to remove tiles in a certain way. The tiles come in groups of four equals. The game is
prepaired by stacking the tiles randomly in a certain layout, which may be threedimensional in the case of Mahjong Solitaire. The object is to remove the tiles in pairs of equal tiles. 

For Shisen-Sho and Mahjong Solitaire, the conditions under which a pair of equal tiles may be removed are different. For Shisen-Sho, a pair of equal tiles may be removed if they are either adjacent, or they can be connected by at most $3$ horizontal and vertical free lines. Here, a line is free if it does not cross or edge a tile. For Mahjong Solitaire, a pair of equal tiles may be removed if they are both free. A tile is free if there are no adjacent tiles above it, and neither on either the left or the right side.

Mahjong Solitaire is usually played with the 144 tiles of the Mahjong game,
where both the $4$ season tiles and the $4$ flower tiles are seen as a group
of $4$ equal tiles. Various layout shapes are used. Shisen-Sho is usually played with a rectangular layout.

A pair of tiles which can be removed is called a playable pair. One way to play the game is by randomly removing playable pairs, hoping that one will not get stuck. But if one gets stuck, one still does not know if all tiles could have be removed. To determine if all tiles could be removed or not, an exhaustive search is needed in general. An efficient way to do this for Mahjong Solitaire can be found in my paper \cite{dB}.  

For Shisen-Sho, we essentially use the same algorithm, but with the test for playable pairs replaced. There is however one point of attention. 
In the algorithm of \cite{dB}, there are scans used for pruning, in which the third and fourth tile of several groups may be removed individually as soon as they are free. 
For Mahjong Solitaire, this comes down to that any tile of such a group may be removed, as soon as it forms a playable pair with another tile of the group: a tile that may or may not have been removed already. But for Shisen-Sho, the latter is more restrictive, whence it leads to better pruning. So we use the last interpretation when we make a Shisen-Sho version of the algorithm.

But first, we prove that Shisen-Sho is NP-complete. More precisely, we prove
that determining if all tiles can be removed in a Shisen-Sho board with a rectangular layout is NP-complete. Next, we show that determining if
two equal tiles form a playable pair can be determined in logarithmic
time, under the assumption that we have bitwise operations on registers with at least as many bits as the layout dimensions.

After that, we extend the rules of the Shisen-Sho game to the Mahjong Solitaire grid, and adapt the algorithm to determine if two equal tiles form a playable pair to this grid. At the end, we present some graphs
to indicate how many random boards are winnable (i.e.\@ all tiles can be
removed) for several layouts, for Shisen-Sho, Mahjong Solitaire, and transposed Mahjong Solitaire. With transposed Mahjong Solitaire, the layout
is transposed, or equivalently, left and right are replaced by front
and rear in the rules of Mahjong Solitaire.

\section{Complexity of Shisen-Sho}

In \cite{IWM}, the following theorem has been proved already, but the
proof is longer than the proof below. The refinement of having only
$5$ rows seems a new result.

\begin{theorem}
Shisen-Sho is NP-complete. 
\end{theorem}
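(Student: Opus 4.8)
The plan is to establish the two halves of NP-completeness separately. For membership in NP, I would take as a certificate the full sequence of pair removals that clears the board. Since every move deletes two tiles, a board with $n$ tiles admits at most $n/2$ moves, so the certificate has polynomial length. Verifying it amounts to checking, move by move, that the two named tiles are equal and currently form a playable pair; testing playability of a single pair is a bounded-turn reachability question on the grid of not-yet-removed cells, solvable in polynomial time (indeed in logarithmic time under the register assumption announced later, but polynomial is all that is needed here). Hence solvability lies in NP.

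For NP-hardness I would reduce from a standard NP-complete problem, with 3-SAT the natural source. The guiding idea is that the connection rule --- a pair is playable only if it can be joined by a polyline of at most three free horizontal/vertical segments, i.e.\ with at most two bends --- is extremely sensitive to which tiles have already been removed, since a single remaining tile can block an otherwise clear corridor. I would exploit this by building a long, thin board of bounded height (the target being $5$ rows) in which the removal order is essentially forced to proceed left to right. Each variable of the formula would be encoded by a small group-of-four gadget in the left portion of the strip, where the choice of how to pair up its four equal tiles selects a truth value and, as a side effect, leaves behind or clears certain wire tiles that control the free/blocked status of corridors further to the right. Each clause would be a downstream gadget whose tiles can be cleared only if at least one of its three literal-corridors has been opened by the corresponding variable choice.

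Concretely, I expect each variable gadget to be arranged so that exactly two pairings of its group of four are ever feasible, corresponding to \emph{true} and \emph{false}, and so that any attempt to leave the gadget partially cleared deadlocks the board, forcing a clean binary decision. The literal corridors would run horizontally through the $5$-row strip, using the spare rows above and below the main line to route the at-most-two-bend connecting paths over or under blocking tiles, so that ``literal satisfied'' corresponds to ``a specific blocker has been removed, making a clause tile reachable.'' The clause gadget's own group of four could then be split into clearable pairs if and only if at least one incoming corridor is open. One then argues that a satisfying assignment yields a legal clearing order, and conversely that any clearing order induces a consistent assignment satisfying every clause.

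The main obstacle is the soundness direction together with the row-count constraint. It is easy to build gadgets that work when the solver ``behaves,'' but hard to rule out unintended solutions in which the solver removes tiles out of the intended order, uses a corridor meant for one clause to discharge another, or pairs tiles across gadgets to sidestep the encoded choice; fitting everything into only $5$ rows sharply limits the slack for preventing such crosstalk, since there is almost no room to separate corridors or to pad gadgets. I would therefore spend most of the effort verifying that the only way to reach a fully cleared board is the intended left-to-right, assignment-respecting schedule, and checking that the two-bend reachability geometry genuinely isolates the gadgets within the five available rows.
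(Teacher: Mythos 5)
The NP-membership half of your argument is fine (the paper leaves it implicit). The hardness half, however, is a plan rather than a proof: you never exhibit a single concrete gadget. You describe what a variable gadget and a clause gadget \emph{should} accomplish, and then you yourself identify the decisive difficulty --- ruling out unintended pairings, corridor reuse, and cross-gadget matches within only five rows --- and defer it. That deferred part is precisely where all the mathematical content of a Shisen-Sho hardness proof lives. Two specific dangers make the gap real rather than cosmetic: first, the connection rule permits paths of up to three segments that leave the bounding box of the two tiles, so in a thin strip a corridor running along (or outside) the top or bottom edge can connect tiles in distant gadgets the moment the intervening cells clear, defeating any ``left-to-right forcing'' unless the gadgets are padded to block such detours; second, every group has four equal tiles, and any two of the six possible pairings of a quadruple are a priori available, so ``exactly two pairings are ever feasible'' is a strong geometric claim that must be verified tile by tile, not asserted. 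Without the gadgets on the table, neither the completeness nor the soundness direction can be checked.

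For comparison, the paper sidesteps SAT gadget engineering entirely: it reduces from Mahjong Solitaire with peeking on isolated stacks, which is already known to be NP-complete (Eppstein's ``Shanghai'' result). Each Mahjong tile $a$ is replaced by a $5\times 5$ supertile built from six auxiliary groups arranged so that the only ordering constraint it enforces is the one inherited from the stack: the tiles above $a$'s supertile must be cleared before $a$ can be reached, and once $a$ is removable the whole supertile can be emptied. Stacking supertiles and padding to a rectangle with groups laid out in playable order completes the reduction. This buys a very short soundness argument because the source problem's only constraint (``remove from the top of each stack'') maps onto a single local blocking property of the supertile, whereas a 3-SAT reduction must synthesize choice, fan-out, and disjunction from scratch. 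If you want to pursue your route, the place to start is a fully explicit variable gadget together with a proof that no pair outside the intended ones is ever playable in any reachable configuration; until then the reduction is not established.
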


\begin{proof}
We reduce from Mahjong Solitaire with peeking with isolated stacks.
The NP-completeness of that is proved under the tag Shanghai in \cite{Ep}.

Let $a$ be a tile of a Mahjong Solitaire stack. We turn $a$ in a supertile
of size $5 \times 5$ for Shisen-Sho by adding $6$ extra groups.
\begin{center}
\begin{tikzpicture}[x=5mm,y=5mm]
\fill[black!10] (0,0) rectangle (5,5);
\fill[black!25] (0,0) rectangle (2,4) (3,0) rectangle (5,4);
\fill[white] (2,2) rectangle (3,3);
\draw[black!50] \foreach \x in {1,...,4} { (\x,0) -- (\x,5) };
\draw[black!50] \foreach \y in {1,...,4} { (0,\y) -- (5,\y) };
\draw (0,0) rectangle (5,5) (0,0) rectangle (2,2) (3,0) rectangle (5,2);
\draw (0,2) rectangle (2,4) (2,2) rectangle (3,3) (3,2) rectangle (5,4);
\draw \foreach[count=\n] \v in {z,z,z,z',z'} { (\n-1,4.5) node[anchor=west] {$\strut\v$} };
\draw \foreach[count=\n] \v in {x,x',z',y',y} { (\n-1,3.5) node[anchor=west] {$\strut\v$} };
\draw \foreach[count=\n] \v in {x',x,a,y,y'} { (\n-1,2.5) node[anchor=west] {$\strut\v$} };
\draw \foreach[count=\n] \v in {y',y,z,x,x'} { (\n-1,1.5) node[anchor=west] {$\strut\v$} };
\draw \foreach[count=\n] \v in {y,y',z',x',x} { (\n-1,0.5) node[anchor=west] {$\strut\v$} };
\end{tikzpicture}
\qquad
\begin{tikzpicture}[x=5mm,y=5mm]
\fill[black!10] (0,0) rectangle (5,5);
\fill[black!25] (0,0) rectangle (2,4) (3,0) rectangle (5,4);
\draw[black!50] \foreach \x in {1,...,4} { (\x,0) -- (\x,5) };
\draw[black!50] \foreach \y in {1,...,4} { (0,\y) -- (5,\y) };
\fill[white] (2,2) rectangle (3,5) (1,4) rectangle (4,5);
\draw (0,4) rectangle (1,5) (4,4) rectangle (5,5); 
\draw (2,0) -- (3,0) (0,0) rectangle (2,2) (3,0) rectangle (5,2);
\draw (0,2) rectangle (2,4) (2,2) rectangle (3,3) (3,2) rectangle (5,4);
\draw \foreach[count=\n] \v in {z,,,,z'} { (\n-1,4.5) node[anchor=west] {$\strut\v$} };
\draw \foreach[count=\n] \v in {x,x',,y',y} { (\n-1,3.5) node[anchor=west] {$\strut\v$} };
\draw \foreach[count=\n] \v in {x',x,a,y,y'} { (\n-1,2.5) node[anchor=west] {$\strut\v$} };
\draw \foreach[count=\n] \v in {y',y,z,x,x'} { (\n-1,1.5) node[anchor=west] {$\strut\v$} };
\draw \foreach[count=\n] \v in {y,y',z',x',x} { (\n-1,0.5) node[anchor=west] {$\strut\v$} };
\end{tikzpicture}
\end{center}
In order to remove one of the $x$, $x'$, $y$ and $y'$-tiles, the $a$-tile needs be removed. 
Hence the $a$-tile needs to be removed to free the $z$-tile and $z'$-tile below it as well.
So the $z$-tile and $z'$ above the $a$-tile need to be removed to free the $a$-tile. This can 
be done (in more ways if there is no supertile on top of the supertile of $a$), 
after which the $a$-tile has a connection to above. But the connection is blocked
if there is another supertile on top of the supertile of $a$. If tile $a$ can be removed 
somehow, then the whole supertile of $a$ can be freed.

So by stacking supertiles, we can emulate Mahjong Solitaire with peeking with isolated 
stacks in Shisen-Sho. The stacks of supertiles can be filled up to a rectangle, by adding
tile groups in literal order. This is because those tiles can be played in the same
order in accordance with the rules of Shisen-Sho.
\end{proof}

\begin{theorem}
Shisen-Sho with $5$ rows is NP-complete. 
\end{theorem}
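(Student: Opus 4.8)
The plan is to reduce again from Mahjong Solitaire with peeking with isolated stacks, reusing the supertile idea of the previous theorem, but changing the global layout so that the board height is a constant independent of the instance size. In the previous proof the stacking direction of Mahjong is emulated by the vertical \emph{connection to above}, so the board height grows like $5$ times the stack height. Since the Shisen-Sho rules are symmetric under transposition, I would first turn the supertile on its side, so that the blocking connection used to emulate a Mahjong stack runs horizontally: a single Mahjong stack then becomes a horizontal chain of supertiles that is exactly $5$ rows tall and grows only in width as the stack deepens. The genuinely new ingredient, needed to fit \emph{several} stacks into the same five rows, is to reserve one of the five rows as a free \emph{matching lane} and to place all the stacks consecutively along the width, each occupying the four rows beside the lane.

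Concretely, I would carry out the following steps. First, redesign the supertile of the previous theorem as a constant-size gadget that lives in the four rows beside the lane, keeping the $x, x', y, y'$ wall tiles that force the central $a$-tile to be removed before any of its neighbours, and keeping $z$- and $z'$-type tiles \emph{between $a$ and the lane} whose removal is exactly what opens a vertical connection from $a$ up to the lane. Second, chain the gadgets of one Mahjong stack horizontally so that this $z,z'$-removal, and hence the exposure of the next $a$ to the lane, is possible only after the previous gadget in the chain has been cleared; this reproduces the pop-from-the-top behaviour of an isolated stack. Third, realise a Mahjong match of two free tops as a Shisen connection that goes one line up into the lane, one line along the now-free lane, and one line down to the other top --- three lines in total, hence legal. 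Finally, pad the remaining cells to a full rectangle with extra tile groups that, as in the previous proof, can be removed first in literal order without interfering.

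The main obstacle is isolation in the face of the $\le 3$-line rule as removals accumulate. With only five rows to spend, every cleared gadget opens a wide free region, and I must make sure that these regions never let a \emph{buried} $a$-tile reach the lane, nor let two tiles of different stacks form an unintended playable pair, while still guaranteeing that every intended top--top match stays within three lines. I would handle this by proving an invariant that the reachable Shisen-Sho positions are in bijection with the reachable Mahjong positions: the wall tiles of each gadget survive until its own $a$ is removed and therefore keep blocking all horizontal shortcuts through its column band, so that the only tiles ever simultaneously connectable to the lane are precisely the current stack tops. Verifying this invariant --- in particular that the lane cannot be entered from a column whose gadget is not yet the top, and that padding groups cannot be used as stepping stones --- is the crux and would constitute the bulk of the proof; the $\mathrm{NP}$ membership and the reduction's polynomial size are immediate.
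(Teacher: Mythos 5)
Your plan reduces from general isolated stacks and tries to buy the constant height by laying each stack out as a horizontal chain of gadgets plus a shared matching lane. The difficulty you yourself flag as ``the crux'' --- the isolation invariant --- is precisely where the proposal stops being a proof, and there are concrete reasons to doubt the construction as described. In the first theorem the pop-from-the-top order is enforced \emph{physically}: the supertile stacked on top sits directly on the vertical escape path of the $a$ below it. Once you rotate the chain to run horizontally while keeping each gadget's escape path vertical (into the lane), the neighbouring gadget no longer lies on that path, so nothing geometric forces the $z,z'$ tiles of gadget $k+1$ to wait for gadget $k$; you would have to engineer their partner tiles so that they become pairable only through the cleared region of gadget $k$, and you have only four rows of material in which to do this for arbitrarily long chains. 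Moreover, the lane is not the only danger: a cleared gadget leaves an empty region spanning all five rows, and a three-line connection can dip \emph{down} through such a region and back up, or run along the bottom edge of the board, so your invariant must exclude far more paths than ``entry into the lane from a non-top column.'' None of this is constructed or verified, so the argument has a genuine gap at its core.

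The paper sidesteps all of this by changing the source problem rather than the gadget geometry: it reduces from Mahjong Solitaire with peeking restricted to isolated stacks of the forms $aab$ and $abb$, whose NP-completeness is established in \S 2 of \cite{dB}. Because these stacks have height $3$, each whole stack fits into a single $5\times 5$ tileblock; the blocks sit side by side in five rows (sharing their $z'$ filler tiles), and the matching of exposed tops happens through the top of the board exactly as in the first theorem --- no chaining, no lane, and no new invariant to prove. If you want to salvage your route, the cleanest fix is the same move: start from a bounded-height variant so that a stack becomes one constant-size gadget rather than a chain.
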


\begin{proof}
We reduce from Mahjong Solitaire with peeking with isolated stacks of the form
$aab$ and $abb$. The NP-completeness of that is proved in \S 2 of \cite{dB}.

We turn stacks $aab$ and $abb$ as follows into tileblocks.
\begin{center}
\begin{tikzpicture}[x=5mm,y=5mm]
\fill[black!10] (0,0) rectangle (5,5);
\fill[black!25] (0,0) rectangle (2,4) (3,0) rectangle (5,4);
\draw[black!50] \foreach \x in {1,...,4} { (\x,0) -- (\x,5) };
\draw[black!50] \foreach \y in {1,...,4} { (0,\y) -- (5,\y) };
\draw[fill=white] (2,1) rectangle (3,2) (2,2) rectangle (3,3) (2,4) rectangle (3,5);
\draw (0,0) rectangle (5,5) (0,0) rectangle (2,2) (3,0) rectangle (5,2);
\draw (0,2) rectangle (2,4) (2,2) rectangle (3,3) (3,2) rectangle (5,4);
\draw \foreach[count=\n] \v in {z,z',a,z',z} { (\n-1,4.5) node[anchor=west] {$\strut\v$} };
\draw \foreach[count=\n] \v in {x,x',z,y',y} { (\n-1,3.5) node[anchor=west] {$\strut\v$} };
\draw \foreach[count=\n] \v in {x',x,a,y,y'} { (\n-1,2.5) node[anchor=west] {$\strut\v$} };
\draw \foreach[count=\n] \v in {y',y,b,x,x'} { (\n-1,1.5) node[anchor=west] {$\strut\v$} };
\draw \foreach[count=\n] \v in {y,y',z,x',x} { (\n-1,0.5) node[anchor=west] {$\strut\v$} };
\end{tikzpicture}
\qquad
\begin{tikzpicture}[x=5mm,y=5mm]
\fill[black!10] (0,0) rectangle (5,5);
\fill[black!25] (0,0) rectangle (2,4) (3,0) rectangle (5,4);
\draw[black!50] \foreach \x in {1,...,4} { (\x,0) -- (\x,5) };
\draw[black!50] \foreach \y in {1,...,4} { (0,\y) -- (5,\y) };
\draw[fill=white] (2,1) rectangle (3,2) (2,3) rectangle (3,4) (2,4) rectangle (3,5);
\draw (0,0) rectangle (5,5) (0,0) rectangle (2,2) (3,0) rectangle (5,2);
\draw (0,2) rectangle (2,4) (2,2) rectangle (3,3) (3,2) rectangle (5,4);
\draw \foreach[count=\n] \v in {z,z',a,z',z} { (\n-1,4.5) node[anchor=west] {$\strut\v$} };
\draw \foreach[count=\n] \v in {x,x',b,y',y} { (\n-1,3.5) node[anchor=west] {$\strut\v$} };
\draw \foreach[count=\n] \v in {x',x,z,y,y'} { (\n-1,2.5) node[anchor=west] {$\strut\v$} };
\draw \foreach[count=\n] \v in {y',y,b,x,x'} { (\n-1,1.5) node[anchor=west] {$\strut\v$} };
\draw \foreach[count=\n] \v in {y,y',z,x',x} { (\n-1,0.5) node[anchor=west] {$\strut\v$} };
\end{tikzpicture}
\end{center}
The tiles with $z'$ are for filling up, and are shared among two tileblocks.
\end{proof}

\section{Testing if two tiles are a matching pair}

Suppose the width and height of the board are estimated by $n$. 
Let $d$ be the distance between the two tiles at hand. 

\begin{proposition}
We can test if two tiles are a matching pair in $\Omicron(n \cdot d)$ steps. 
\end{proposition}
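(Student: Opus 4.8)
The plan is to reformulate the matching condition combinatorially and then argue that a bounded breadth-first search establishes it within the claimed bound. Recall that two equal tiles form a playable pair if they can be joined by a path of at most three axis-parallel free line segments, where a segment is free if it passes through no occupied cell (and does not edge a tile). Because the path uses at most three straight segments, it makes at most two turns, so every connecting path can be described by at most two intermediate ``corner'' cells. I would first observe that the relevant free cells are exactly the empty cells of the board (cells whose tile has been removed, or cells outside the original layout which I treat as permanently free), together with a one-cell margin around the board so that paths are allowed to route just outside the layout.

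First I would set up the search as a layered reachability computation. From the first tile, I would determine in one pass all cells reachable by a single free horizontal-or-vertical segment; these are the cells lying in the same row or column with no obstruction in between. This is one ``expansion'' of the frontier along rows and columns. A path with at most three segments corresponds to performing this segment-expansion three times: after the first expansion I have all cells reachable by one segment, after the second all cells reachable by two segments (one turn), and after the third all cells reachable by three segments (two turns). The two tiles match precisely when the second tile lies in the frontier after at most three expansions and is reachable from a cell adjacent to it by a final free segment. Since the number of expansions is the constant $3$, the total cost is three times the cost of a single segment-expansion.

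The key step is therefore to bound the cost of a single segment-expansion by $\Omicron(n \cdot d)$. Here I would exploit the geometry imposed by $d$, the distance between the two tiles. Any cell that can participate in a connecting path of at most three free segments must lie within distance roughly $d$ of both endpoints in the appropriate coordinate; more precisely, the intermediate corners are confined to a bounded band of the grid determined by the bounding box of the two tiles enlarged by the margin, so the active region has $\Omicron(n \cdot d)$ cells rather than $\Omicron(n^2)$. For each cell in the frontier, sweeping along its row and column to mark newly reachable cells, while reusing the obstruction information, costs $\Omicron(n)$ amortized, and there are $\Omicron(d)$ distinct rows and columns that can carry a useful segment. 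Multiplying the per-row/column sweep cost $\Omicron(n)$ by the $\Omicron(d)$ relevant lines, and by the constant number of expansions, yields the claimed $\Omicron(n \cdot d)$ bound.

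The main obstacle I expect is making the confinement argument rigorous: I must show that restricting attention to the $\Omicron(n \cdot d)$-cell band loses no valid connecting path, i.e.\ that if a three-segment connection exists at all, then one exists whose turning points stay inside this band. This follows because each of the three segments is axis-parallel and the endpoints are fixed, so a corner that wanders far outside the bounding box could be pulled back onto the box boundary without creating a new obstruction, but I would need to verify this straightening step does not increase the segment count beyond three. A secondary subtlety is the treatment of the board margin and the ``does not edge a tile'' clause, which I would handle by explicitly including the surrounding empty ring as free cells so that paths may legally skirt the layout. Once confinement and the margin bookkeeping are pinned down, the running-time bound is a routine product of the per-sweep cost and the number of relevant lines.
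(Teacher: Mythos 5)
Your high-level plan (layered reachability with a constant number of axis-parallel expansions) is workable and is close in spirit to the paper's argument, but the step you yourself flag as the main obstacle --- confinement of the turning points --- is exactly where the proposal breaks. It is not true that the corners of a valid three-segment connection can be pulled back to (a margin around) the bounding box of the two tiles: take two tiles in the same column whose own column is blocked between them, with both of their rows free far off to one side and every intermediate column blocked except one very distant column. The only connection is horizontal--vertical--horizontal with the middle segment in that distant column, arbitrarily far outside the bounding box, and no straightening move relocates it without crossing tiles. So an algorithm that searches only an $\Omicron(n\cdot d)$ band around the bounding box would declare some playable pairs unplayable. Your cost accounting inherits the same inversion: you charge $\Omicron(n)$ per sweep over $\Omicron(d)$ ``useful'' lines, whereas in fact there are $\Omicron(n)$ candidate lines for the middle segment (one per column, in the case above), each of which needs only $\Omicron(d)$ cells inspected, because the middle segment spans only the rows strictly between the two tiles. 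The product is the same, but the region you propose to search is the wrong one.

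The paper's proof avoids this by splitting on the orientation of the connection rather than confining a generic BFS. Fix the case where the first and third lines are horizontal and the middle line vertical (the other case is symmetric, and connections with fewer lines are degenerate instances). Compute the horizontal free range of each tile within its own row in $\Omicron(n)$ steps; then, for each of the at most $n$ columns lying in both ranges, check in $\Omicron(d)$ steps whether that column is free between the two tiles' rows. This yields $\Omicron(n\cdot d)$ with the correct confinement built in: the corners' row coordinates are fixed (they are the two tiles' rows), their column coordinates are unrestricted, and only the middle segment --- of length at most $d$ --- varies. If you replace your bounding-box band by this region (two full rows plus, for each of the $n$ columns, the at most $d$ cells between the rows), your formulation goes through and essentially reduces to the paper's proof.
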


\begin{proof}
We only need to study connections with the first and third line horizontal and 
the second line vertical, since the other type of three-line connection is similar,
and connections with fewer lines are degenerate cases of three-line connections.

First, we compute how far we can reach horizontally from both tiles. This can be done
in $\Omicron(n)$ steps. For each column that can be reached horizontally by both tiles, 
we check if it is free between the two tiles. This can be done in $\Omicron(d)$ steps 
for each column. 
\end{proof}

We can improve the bound of $\Omicron(n \cdot d)$ drastically, if we assume that we have 
registers with $n$ bits. 
We assume that we can perform bitwise logic on registers with $n$ bits, including zero testing. 
Furthermore, we assume that we can set the first $k$ bits and clear the last $n-k$ bits 
of a register with $n$ bits. The value of $k$ comes from a smaller register which can hold the
values $0,1,2,\ldots,n$, i.e.\@ a register with more than $\log n$ bits. We assume
that we can do everything on the small registers which we can do on the large registers
with $n$ bits. Furthermore, we assume that we can do either addition or subtraction 
(hence both) on the small registers. Our memory usage is $\Omicron(n)$ registers.

\begin{theorem} \label{notree}
Testing if two tiles form a matching pair can be performed in $\Omicron(\log n + d)$ steps. 
With bit scan techniques available on the large registers, only $\Omicron(d)$ steps are
needed.
\end{theorem}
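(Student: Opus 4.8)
The plan is to keep, for every row, an $n$-bit register whose $j$-th bit records whether cell $(r,j)$ carries a tile, and symmetrically an $n$-bit register for every column; this uses $\Omicron(n)$ registers, and the column registers let us treat the vertical-horizontal-vertical connection as the transpose of the horizontal-vertical-horizontal one, which by the previous proposition is all we must handle. Fix the two candidate turning rows as the rows $r_1,r_2$ of the tiles at $(r_1,c_1)$ and $(r_2,c_2)$. The naive proof inspects each shared reachable column separately, costing $\Omicron(d)$ per column; instead I would perform the per-column freeness test for all columns simultaneously with one chain of bitwise operations.

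The first step is to turn the horizontal reach of each tile into a bitmask. From $(r_1,c_1)$ the reachable turning columns form the maximal run of free cells around $c_1$ in row $r_1$, bounded by the nearest occupied cell to the left and to the right. Locating those two boundaries is the only delicate point: using only the assumed primitives I would binary-search, repeatedly using the ``set the first $k$ bits'' operation to mask off a half of the register, ANDing with $\mathrm{occ}_{r_1}$, and zero-testing to decide which half contains the nearest tile; this costs $\Omicron(\log n)$ steps per boundary. When a hardware bit scan is available on the large registers, each boundary is found in $\Omicron(1)$, which is exactly what removes the $\log n$ term. Once the two boundaries are known as values in the small registers, the reach interval $I_1$ is assembled as a contiguous bitmask by two applications of ``set the first $k$ bits'', and $I_2$ is obtained in the same way from row $r_2$. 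I would pad the board with empty virtual rows and columns on all sides, so that runs reaching a board edge, and hence connections going around the board, are captured uniformly by the same bitmasks.

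The second step tests vertical freeness over all columns at once. Let $B$ be the bitwise OR of the row registers $\mathrm{occ}_r$ over the $\Omicron(d)$ rows strictly between $r_1$ and $r_2$; then bit $c$ of $B$ is set exactly when the vertical segment at column $c$ between the two rows meets a tile. A valid turning column exists iff $I_1 \wedge I_2 \wedge \lnot B$ is nonzero, which is a single zero-test after the interval masks have been intersected with the complement of $B$. Forming $B$ costs $\Omicron(d)$ OR operations and the remaining logic is $\Omicron(1)$, so the total is $\Omicron(\log n + d)$, dropping to $\Omicron(d)$ with bit scan.

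The main obstacle is the careful bookkeeping at the endpoints: I must ensure that the two tile cells themselves count as admissible turning columns, so that the degenerate connections with one or two lines noted in the previous proposition are subsumed; that the turning cells $(r_1,c)$ and $(r_2,c)$ are required free exactly through membership in $I_1$ and $I_2$; and that the OR defining $B$ ranges over the strictly interior rows only. Verifying that these conventions make the single zero-test equivalent to the existence of an admissible three-line path — including the around-the-board cases handled by the virtual padding — is where the real work lies; the complexity count itself is then immediate.
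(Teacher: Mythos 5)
Your proposal is correct and follows essentially the same route as the paper: maintain per-row population vectors, find each tile's horizontal reach by binary search (or in $\Omicron(1)$ with bit scan), then test all candidate turning columns in parallel by OR-ing the $\Omicron(d)$ intervening rows and intersecting with the two reach masks. Your extra discussion of endpoint conventions and edge padding is just a more explicit spelling-out of details the paper leaves implicit.
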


\begin{proof}
Again, we only need to study connections with the first and third line horizontal and 
the second line vertical. We assume that we have registers with tile population vectors 
for each row.

Computing how far one can reach horizontally from a tile can be done with binary
search in $\Omicron(\log n)$ steps. With bit scan techniques, $\Omicron(1)$ steps 
are sufficient.
Finding a free column in the common horizontal range can be done in parallel, by
computing the bitwise disjunction (bitwise or) of the rows between both tiles.
This takes $\Omicron(d)$ steps.
\end{proof}

In a practical algorithm where 64 bit integers are used as large registers, 
computing how far one can reach horizontally from a tile will be done differently,
because subtraction can be used as well. In our algorithm, we first remove the tile 
at hand from the population vector. Next, we compute how far one can reach horizontally
from the position of the left side of the tile, using the following function.

\begin{lstlisting}[language=C++,basicstyle=\ttfamily\footnotesize,
backgroundcolor=\color{black!10},frame=shadowbox,
rulecolor=\color{black!50},rulesepcolor=\color{black!25},xrightmargin=2pt]
// optional speedup
#define bitscanreverse64(g) (63 ^ __builtin_clzl(g))

// p is the position to scan from for consecutive cleared bits in f
// p must be in the interval [0..63], optional to allow 64 as well
// consecutive cleared bits found are returned as set bits
static unsigned long fillrange (unsigned long f, int p)
{
  unsigned long g = f;
  // /* if (0 <= p <= 63) */ g = bits of f below position p
  /* if (!(p & ~63)) */ g &= ((1lu << p) - 1);
  if (g == 0) return ~f & (f - 1);
  // g = smallest power of 2 exceeding g
#ifdef bitscanreverse64
  g = 2lu << bitscanreverse64 (g);
#else
  g |= g >> 1;
  g |= g >> 2;
  g |= g >> 4;
  g |= g >> 8;
  g |= g >> 16;
  g |= g >> 32;
  g++;
#endif
  return ~f & (f - g);
}
\end{lstlisting}

Notice that forward bit scanning is not necessary, because subtraction can be used. 
If reverse bit scanning is not available, then binary search is still not used, 
because it takes conditional jumps, which is not preferable on modern CPUs.
Just as binary search, the actual replacement technique for reverse bit scanning takes 
$\Omicron(\log n)$ steps. 

If one replaces {\ttfamily\^{ }} by {\ttfamily-} in the macro {\ttfamily bitscanreverse64}, 
then gcc will not optimize it to the corresponding Intel/AMD x64 instruction.

\begin{theorem}
If we allow playing and unplaying a tile to take $\Omicron(\log n)$ steps as a gambit, then testing 
if two tiles form a matching pair can be done in $\Omicron(\log n)$ steps. 
With bit scan techniques available on all registers, this improves to $\Omicron(\log d)$ 
steps.
\end{theorem}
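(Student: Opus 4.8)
The plan is to remove the $\Omicron(d)$ bottleneck of Theorem~\ref{notree}. In that proof everything is cheap except the bitwise disjunction of the rows strictly between the two tiles, which costs $\Omicron(d)$ precisely because it touches $\Omicron(d)$ separate row registers. I would therefore precompute these disjunctions in a balanced binary tournament tree (segment tree) laid over the rows: its leaves hold the per-row population vectors that we already maintain, and each internal node holds the bitwise or of its two children, i.e.\@ the disjunction of a contiguous block of rows. Playing or unplaying a tile changes a single bit in one leaf; to restore the invariant I would rewrite that leaf and then recompute each of its $\Omicron(\log n)$ ancestors as the or of its two children. Since the or is not invertible I cannot cancel a cleared bit directly, but recomputing bottom-up along the single root path is still $\Omicron(\log n)$ register operations. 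This is exactly the promised gambit.

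With this structure in place, the matching test becomes: compute the horizontal reaches $R_1$ and $R_2$ of the two tiles with the \texttt{fillrange} technique, obtain the disjunction $M$ of the intervening rows from the tree, and test whether $R_1 \wedge R_2 \wedge \neg M$, corrected at the two boundary rows, is nonzero. The only part whose cost depends on the block is the range query for $M$, and here the key idea is to evaluate it \emph{bottom-up} rather than top-down: place one pointer at the left boundary leaf and one at the right boundary leaf and walk both upward, accumulating the or of every node that falls wholly inside the range. Each pointer is divided by two per level, so the number of leaves still separating them is roughly halved at every level; they meet after $\Omicron(\log d)$ levels rather than only upon reaching the root $\log n$ levels up. Hence the query costs $\Omicron(\log d)$ register operations.

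It remains to add up the pieces. The reach computation is on the wide row registers and, exactly as in Theorem~\ref{notree}, costs $\Omicron(\log n)$ by binary search but only $\Omicron(1)$ once reverse bit scanning is available. The final conjunction and zero test are $\Omicron(1)$. Thus without bit scan the reach dominates and the whole test is $\Omicron(\log n + \log d) = \Omicron(\log n)$, while with bit scan available the reach collapses to $\Omicron(1)$ and the bottom-up range query is left as the cost, giving $\Omicron(\log d)$.

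I expect the delicate step to be the $\Omicron(\log d)$ bound on the range query. A naive top-down recursive decomposition of a length-$d$ interval still descends $\Omicron(\log n)$ levels whenever the interval straddles a high split of the tree, so the improvement genuinely relies on the bottom-up traversal together with the observation that the gap between the two pointers contracts geometrically and closes after $\Theta(\log d)$ levels. The only other point needing care is correctness of the free-column test at the two boundary rows, where the tiles themselves sit and must be excised from the reach masks before the conjunction; this is the same boundary bookkeeping already present in the proof of Theorem~\ref{notree}.
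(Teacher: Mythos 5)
Your proposal is correct and takes essentially the same route as the paper: the same or-tree over the rows, the same $\Omicron(\log n)$ root-path recomputation as the gambit, and an $\Omicron(\log d)$ bottom-up range-disjunction query whose cost is controlled by the geometric shrinking of the uncovered gap rather than by the height of the common ancestor. The only cosmetic difference is that the paper first locates the last common ancestor of the two boundary leaves via a reverse bit scan on the XOR of their indices and then performs two Fenwick-style staircase walks toward it, whereas you let the two pointers climb until the half-open interval between them closes, which terminates after the same $\Omicron(\log d)$ levels without ever naming that ancestor.
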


\begin{proof}
We improve the computation of the bitwise disjunction (bitwise or) in the proof of 
theorem \ref{notree} from $\Omicron(d)$ to $\Omicron(\log d)$. 

Suppose that the rows are indexed $0,1,\ldots,n-1$. We embed the row indexes in
the odd numbers of the range $1,2,\ldots,2n-1$, by sending row index $k$ to $2k+1$.
We see the rows as leaves of a perfect tree. The even numbers in the range 
$1,2,\ldots,2n-1$ are used for the interior nodes of the tree, as depicted below.
\begin{center}
\begin{tikzpicture}[x=10.5pt,y=10.5pt]
\tikzstyle{nodestyle}=[circle,draw,fill=black!10,inner sep=0pt,minimum size=4mm]
\draw (32,5) node[nodestyle] (32) {$\scriptstyle 32$};
\draw (16,4) node[nodestyle] (16) {$\scriptstyle 16$} (16) -- (32);
\foreach \n/\m in {8/16,24/16} {
  \draw (\n,3) node[nodestyle] (\n) {$\scriptstyle \n$} (\n) -- (\m);
}
\foreach \n/\m in {4/8,12/8,20/24,28/24} {
  \draw (\n,2) node[nodestyle] (\n) {$\scriptstyle \n$} (\n) -- (\m);
}
\foreach \n/\m in {2/4,6/4,10/12,14/12,18/20,22/20,26/28,30/28} {
  \draw (\n,1) node[nodestyle] (\n) {$\scriptstyle \n$} (\n) -- (\m);
}
\foreach \n/\m in {1/2,3/2,5/6,7/6,9/10,11/10,13/14,15/14,
                   17/18,19/18,21/22,23/22,25/26,27/26,29/30,31/30} {
  \draw (\n,0) node[nodestyle,fill=black!25] (\n) {$\scriptstyle \n$} (\n) -- (\m);
}
\end{tikzpicture}
\end{center}
The data of the leaves are the population vectors of the corresponding rows.
The data of the internal nodes are the bitwise disjunction of the data of
the child nodes, i.e.\@ the bitwise disjunction of all population vectors
of its descendant leaves. This can indeed be maintained in $\Omicron(\log n)$
steps when a tile is played or unplayed, see also the code below.

Suppose that we have tiles in rows $i-1$ and $j$ such that $i-1 < j$. Then
we must compute the bitwise disjunction of rows $i,i+1,\ldots,j-1$,
i.e. the bitwise disjunction of the leaves between $2i$ and $2j$.
If $i < j$, then the index of the last common ancestor of leaves $2i + 1$ and $2j + 1$ 
lies between $2i + 1$ and $2j + 1$. By taking $k$ such that $2k$ is the index 
of this common ancestor if $i < j$, and by taking $k = i$ if $i = j$, we obtain 
that $2k$ lies between $2i - 1$ and $2j + 1$, and is the index of a common 
ancestor of leaves $2i + 1$ and $2j - 1$ if $i < j$, which we assume from now on.

The computation of $k$ such that $2k$ is the last common ancestor of leaves $2i+1$ and $2j+1$, where $i < j$, can be done in $\Omicron(1)$ steps with reverse bit scanning on small registers, see the code below. Without bit scan techniques, we can use binary search to replace reverse bit scanning, just as for the large registers in the computation of the horizontal range, and $\Omicron(\log \log n)$ steps suffice.

We can compute the bitwise disjunction of the leaves between $2i$ and $2k$ in $\Omicron(\log d)$ steps as follows. Suppose first that both $i$ and $k$ are odd. Then node $2k$ is the parent node of node $2i+1$, and 
either $k = i$ or $k = i+1$. So $k = i$ and the result is $0$.

Suppose next that either $i$ or $k$ is odd, but not both. Then node $2i + 2$ the last ancestor of node $2i+3 = 2(i+1) + 1$ which is smaller than $2i+3$, and $i + 1 \le k$. Furthermore, node $2i + 2$ is the last ancestor of node $2i+1$ which is larger than $2i+1$, and node $2k$ is any such ancestor, so node $2k$ is an ancestor of or just equal to node $2i + 2$. Consequently,
node $2k$ is an ancestor of node $2(i+1) + 1$. We can obtain the
bitwise disjunction of the leaves between $2(i+1)$ and $2k$ recursively.
The result is the bitwise disjunction of that and the data of node $2i+1$.

Suppose finally that both $i$ and $k$ are even. Then node $2i + 2 = 2\big(2(i/2)+1\big)$ is the parent of node $2i + 1$. So node $2k = 2\big(2(k/2)\big)$ is an ancestor of node $2\big(2(i/2)+1\big)$. First, we remove the leaves of the tree and keep the internal nodes with their even numbers. Next, we divide the node numbers by $2$, to obtain a new tree of the same type as the one we started with. The result is the bitwise disjunction of the new leaves between $2(i/2)$ and $2(k/2)$ of the new tree, which we can obtain recursively.

The bitwise disjunction of the leaves between $2k$ and $2j$ can be computed similarly.
\end{proof}

Below is the code of a practical algorithm where 64 bit integers are used as 
large registers, for updating the bitwise disjunction tree after playing or 
unplaying a tile.

\begin{lstlisting}[language=C++,basicstyle=\ttfamily\footnotesize,
backgroundcolor=\color{black!10},frame=shadowbox,
rulecolor=\color{black!50},rulesepcolor=\color{black!25},xrightmargin=2pt]
// updates fill[2], fill[4], ..., fill[126] as
// interlacing ortree of fill[1], fill[3], ..., fill[127]
// after changing fill[p] for odd p between 0 and 128
//                                 ______________________________32
//                 ______________16______________
//         ______08______                  ______24______
//     __04__          __12__          __20__          __28__
//   02      06      10      14      18      22      26      30
// 01  03  05  07  09  11  13  15  17  19  21  23  25  27  29  31
static void ortree_update (unsigned long *fill, int p)
{
  p &= -4;
  fill[p+2] = fill[p+1] | fill[p+3];
  p &= -8;
  fill[p+4] = fill[p+2] | fill[p+6];
  p &= -16;
  fill[p+8] = fill[p+4] | fill[p+12];
  p &= -32;
  fill[p+16] = fill[p+8] | fill[p+24];
  // the below can be skipped because fill[1] and fill[127]
  // are never taken as or-operand in applications of ortree_or
  /* p &= 64;
  fill[p+32] = fill[p+16] | fill[p+48];
  p = 0;
  fill[p+64] = fill[p+32] | fill[p+96]; */
}
\end{lstlisting}

Below is the code of a practical algorithm where 64 bit integers are used as 
large registers, for the actual computation of the bitwise disjunctions.

\begin{lstlisting}[language=C++,basicstyle=\ttfamily\footnotesize,
backgroundcolor=\color{black!10},frame=shadowbox,
rulecolor=\color{black!50},rulesepcolor=\color{black!25},xrightmargin=2pt]
// returns fill[2*p1+1] | fill[2*p1+3] | ... | fill[2*p2-1]
// assuming that fill[2], fill[4], ..., fill[126] is
// interlacing ortree of fill[1], fill[3], ..., fill[127]
//                                 ______________________________32
//                 ______________16______________
//         ______08______                  ______24______
//     __04__          __12__          __20__          __28__
//   02      06      10      14      18      22      26      30
// 01  03  05  07  09  11  13  15  17  19  21  23  25  27  29  31
static unsigned long ortree_or_x2 (unsigned long *fill, int p1, int p2)
{
  // negation of leading bit except value -1 for 0
  static int negationofleadingbit[128] = {
     -1, -1, -2, -2, -4, -4, -4, -4, -8, -8, -8, -8, -8, -8, -8, -8,
    -16,-16,-16,-16,-16,-16,-16,-16,-16,-16,-16,-16,-16,-16,-16,-16,
    -32,-32,-32,-32,-32,-32,-32,-32,-32,-32,-32,-32,-32,-32,-32,-32,
    -32,-32,-32,-32,-32,-32,-32,-32,-32,-32,-32,-32,-32,-32,-32,-32,
    -64,-64,-64,-64,-64,-64,-64,-64,-64,-64,-64,-64,-64,-64,-64,-64,
    -64,-64,-64,-64,-64,-64,-64,-64,-64,-64,-64,-64,-64,-64,-64,-64,
    -64,-64,-64,-64,-64,-64,-64,-64,-64,-64,-64,-64,-64,-64,-64,-64,
    -64,-64,-64,-64,-64,-64,-64,-64,-64,-64,-64,-64,-64,-64,-64,-64
  };
  // if p1 == p2, then p1 == p == p2
  // if p1 < p2, then p1 + 1 <= p <= p2, and if we see tree as family
  // tree, then p is the last common ancestor of p1 + 1 and p2, and
  // 2 * p is the last common ancestor of 2 * p1 + 1 and 2 * p2 + 1
  int p = p2 & negationofleadingbit[ p1 ^ p2 ];
  unsigned long f = 0;
  unsigned long *fill_2p = fill + 2 * p;

  // f |= fill[2*p1+1] | fill[2*p1+3] | ... | fill[2*p-1]
  p1 -= p;
  // f |= fill_2p[2*p1+1] | fill_2p[2*p1+3] | ... | fill_2p[-1]
  while (p1) {
    // s = trailing bit of p1
    int s = p1 & -p1;
    f |= fill_2p[ 2 * p1 + s ];
    p1 += s;
  }

  // f |= fill[2*p+1] | fill[2*p+3] | ... | fill[2*p2-1]
  p2 -= p;
  // f |= fill_2p[1] | fill_2p[3] | ... | fill_2p[2*p2-1]
  while (p2) {
    // s = trailing bit of p2
    int s = p2 & -p2;
    p2 -= s;
    f |= fill_2p[ 2 * p2 + s ];
  }

  return f;
}
\end{lstlisting}

In the computation 
of $k$, the usage of bit scan reverse is replaced by a lookup table. Our theoretical
model allows such a table as well, because it consists of $\Omicron(n)$ small 
registers. The initialization of the table can be done in $\Omicron(n)$ steps.

There are a few low level optimizations in the code. The most important of them
is that the array pointer is translated to make computations easier. 
The loops are controlled by zero testing on variables, so no extra
comparisons are required. One might think that the array index
computations in the loops take $3$ 
instructions on Intel/AMD x64 machines, namely either a move, a shift (or 
multiplication) and an addition, or a move and $2$ additions. But only one 
instruction is needed, namely a `load effective address' instruction.

\section{The Mahjong grid}

Mahjong Solitaire has a more complex tile positioning system. First, tiles have three
coordinates, namely row, column, and level. Second, the row and column coordinates
are a multiple of $\frac12$ instead of $1$. The latter can also be seen as a property 
of the tiles themselves, namely that their dimensions are $2 \times 2 \times 1$.

\subsection{Shisen-Sho in three dimensions}

We extend the Shisen-Sho game to three dimensions as follows. If there are other tiles above a given tile, then that tile cannot be played. So the pillar above the tile to be played must be free. If we have two tiles on the same level with no tiles above them, then they can be paired if they can paired within their level.
If we have two tiles on the same level with no tiles above them, then they can be 
paired if they connect with at most $2$ horizontal and vertical lines in the highest
level. 

The pillar line above the lowest tile can be seen as the third line. If we would allow $2$ 
of the $3$ lines to be pillar lines, then tiles on the same level may be pairable in the 
threedimensional grid without being pairable within their level. The result of that
would be that the threedimensional game is not an extension of the twodimensional game,
which is not what we want. 

So we only allow only $1$ of the $3$ lines to be a pillar line. If tile population
is descending in every pillar, then tiles on different levels can be connected by three 
grid lines of which one pillar line, if and only if this can be done in the above way, 
i.e.\@ by starting with a pillar line from the lowest tile. The
xmahjongg layouts indeed have the property that tile population
is descending in every pillar during game play.

\subsection{Pair matching tests for the Mahjong grid}

Besides tile population vectors for the rows and columns, tile population vectors for
the pillars are used. We use 16 bit integers for these vectors, so that 16 levels can 
be distinguished. The tile population vectors for the pillars are in agreement with
the real situation, so every tile contributes to $4$ tile population vectors for the 
pillars. 

But the tile population vectors for the rows and columns are not in agreement with
the real situation any more. If a tile is unplayed, then $2$ consecutive bits are
set in only $1$ row instead of $2$, and only $1$ column instead of $2$. So the 
real tile population vector for row $i$ is the bitwise disjunction of the used 
tile population vectors for rows $i$ and $i + 1$. The same holds for the columns.

The horizontal range of a tile is the bitwise disjunction of the horizontal range of 
the tile in its $2$ rows, as real tile population vectors. The actual computation
of it in our code is as follows.
\begin{lstlisting}[language=C++,basicstyle=\ttfamily\footnotesize,
backgroundcolor=\color{black!10},frame=shadowbox,
rulecolor=\color{black!50},rulesepcolor=\color{black!25},xrightmargin=2pt]
      // row above tile
      unsigned long f1 = rowfill[lev][r-2];
      // row of tile without tile itself
      unsigned long f2 = rowfill[lev][r] & ~(3lu << col);
      // row below tile
      unsigned long f3 = rowfill[lev][r+2];
      f1 |= f2;
      f3 |= f2;
      if (f1 == f2 || f2 == f3) {
        rowfillrange = fillrange(f2,col);
      } else {
        rowfillrange = fillrange(f1,col) | fillrange(f3,col);
      }
\end{lstlisting}

Since the tile appears in only one tile population vector for the rows, 
it only needs to be removed from one such vector. The result of this
is {\ttfamily f2}. {\ttfamily f1} and {\ttfamily f3} are tile 
population vectors for real rows, again with the tile at hand removed. 
The indexes {\ttfamily r-2} and {\ttfamily r+2} (instead of 
{\ttfamily r-1} and {\ttfamily r+1}) are because of the interlacing 
internal nodes of the bitwise disjunction tree.

There is some testing to ensure that is some cases where this is possible, 
only one horizontal range computation is performed. One such case is where 
the Mahjong grid is used to emulate the Shisen-Sho grid by way of using 
only even row and column positions. 

Surprisingly, we can find connections with the first and third line horizontal 
and the second line vertical in the same way as for the Shisen-Sho grid,
i.e. we test if the bitwise conjunction of the horizontal ranges of the tiles
(as sequences of set bits) conincides with a free bit of the bitwise disjunction 
of the tile population vectors of the rows in between the tiles. There are $2$ 
differences with respect to the Shisen-Sho grid here: the tile population vectors
of the rows are not in agreement with the real situation, and the computation 
of the horizontal ranges is different.

Other types of connections between tiles on the same level are found in a similar
way. For connections between tiles on a different level, the horizontal and vertical 
range of the lowest tile is replaced by just the tile itself, after which the 
connectivity of the tiles is tested in the level of the highest tile just as for 
tiles on the same level.

There are $2$ versions of the program: one with the Shisen-Sho grid and one
with the Mahjong grid. But in contrast to the above explanation, the version
with the Shisen-Sho grid adopts three dimensions already. The version for
the Mahjong grid is slower, but not very much, which is in agreement with the $\Omicron(1)$ extra complexity of the pair matching computations.

\section{Sampling results}

We sampled several layouts of xmahjongg and several rectangular layouts. 
To be able to sample Shisen-Sho on xmahjongg layouts, we had to use the 
Shisen-Sho solver for the Mahjong grid. We used this solver for the rectangular grids as well, so the width and height was limited by $32$.

We made a new Mahjong Solitaire solver along with the Shisen-Sho solver (as a extra compiler option). This solver differs from the original Mahjong Solitaire solver in that it requires free tiles not to have any tile above them, instead of just adjacent tiles. But for layouts for which tile polulation is decreasing in every pillar, such as the xmahjongg layouts, this difference has no effect on the game. 

Since the new Mahjong Solitaire solver is about $3$ times as slow as the original Mahjong Solitaire solver, we used the original Mahjong Solitaire solver for the Mahjong Solitaire sampling results. 

For Shisen-Sho, we chose $1.15^{6 \sqrt{k}}$ as the number of attempts to solve a board by playing random matches, with $k$ being the number of 
groups with $4$ equal tiles. For Mahjong Solitaire, we chose
$1.2^{6 \sqrt{k}}$ to be this number. This number was
$1.2^k$ originally, which comes down to the same for layouts with $144$
tiles, such as the xmahjongg layouts.

The foo layout and the bar layout are not part of xmahjongg. They were
designed by me, and are as follows.
\begin{center}
\begin{tikzpicture}[x=2.5mm,y=2.5mm]
\draw[fill=lightgray] (0,0) rectangle (18,20);
\draw \foreach \x in {2,4,...,16} {
   (\x,0) -- (\x,20)
};
\draw \foreach \y in {2,4,...,18} {
   (0,\y) -- (18,\y)
};
\draw[fill=lightgray] (3,3) rectangle (15,17);
\draw \foreach \x in {5,7,...,13} {
   (\x,3) -- (\x,17)
};
\draw \foreach \y in {5,7,...,15} {
   (3,\y) -- (15,\y)
};
\draw[fill=lightgray] (6,6) rectangle (12,14);
\draw \foreach \x in {8,10} {
   (\x,6) -- (\x,14)
};
\draw \foreach \y in {8,10,12} {
   (6,\y) -- (12,\y)
};
\begin{scope}[shift={(23,1)}]
\draw[fill=lightgray] (0,0) rectangle (20,18);
\draw \foreach \x in {2,4,...,18} {
   (\x,0) -- (\x,18)
};
\draw \foreach \y in {2,4,...,16} {
   (0,\y) -- (20,\y)
};
\draw[fill=lightgray] (3,3) rectangle (17,15);
\draw \foreach \x in {5,7,...,15} {
   (\x,3) -- (\x,15)
};
\draw \foreach \y in {5,7,...,13} {
   (3,\y) -- (17,\y)
};
\draw[fill=lightgray] (6,6) rectangle (14,12);
\draw \foreach \x in {8,10,12} {
   (\x,6) -- (\x,12)
};
\draw \foreach \y in {8,10} {
   (6,\y) -- (14,\y)
};
\end{scope}
\end{tikzpicture}
\end{center}
The foo and the bar layout have $3\cdot 4  + 6\cdot 7 + 9\cdot 10 = 144$
tiles, just like the xmahjongg layouts.
The foo and the bar layout are the transpose of each other. The deepwell layout is selftranspose/symmetric.

We used the Computational Science compute cluster of Radboud University Nijmegen for our layout sampling. The computations took about $5$ weeks. The longest computations run for about $3$ weeks. All Shisen-Sho computations were done in one run, but many of the (transposed) Mahjong Solitaire computations were aborted by system administration. For that reason, I had to write versions of the program for (transposed) Mahjong Solitaire which resume aborted computations. Computation time could not be restored, since it is only written at the end of the computation.

We will present graphs of the samples: the exact sample results can be found in the source along with the program code.

\begin{figure*}[!p]
{\center \large \bf 
Proportion of winnable Shisen-Sho games for \\ 
rectangular layouts and xmahjongg layouts \\
}
\bigskip
\makebox[\textwidth][c]{\includegraphics{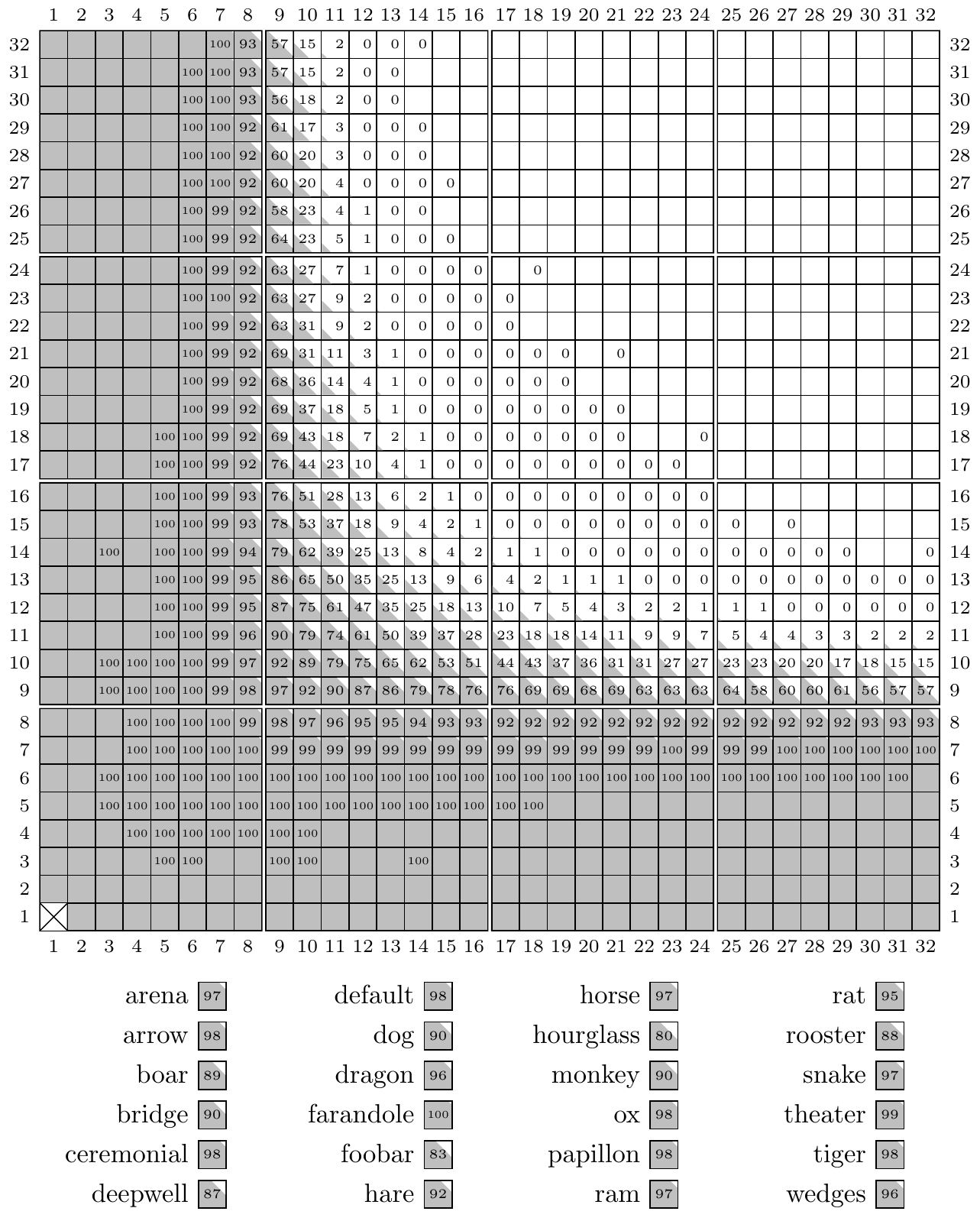}}
\end{figure*}

\begin{figure*}[!p]
{\center \large \bf 
Proportion of winnable Mahjong Solitaire games for \\ 
rectangular layouts and xmahjongg layouts \\
}
\bigskip
\makebox[\textwidth][c]{\includegraphics{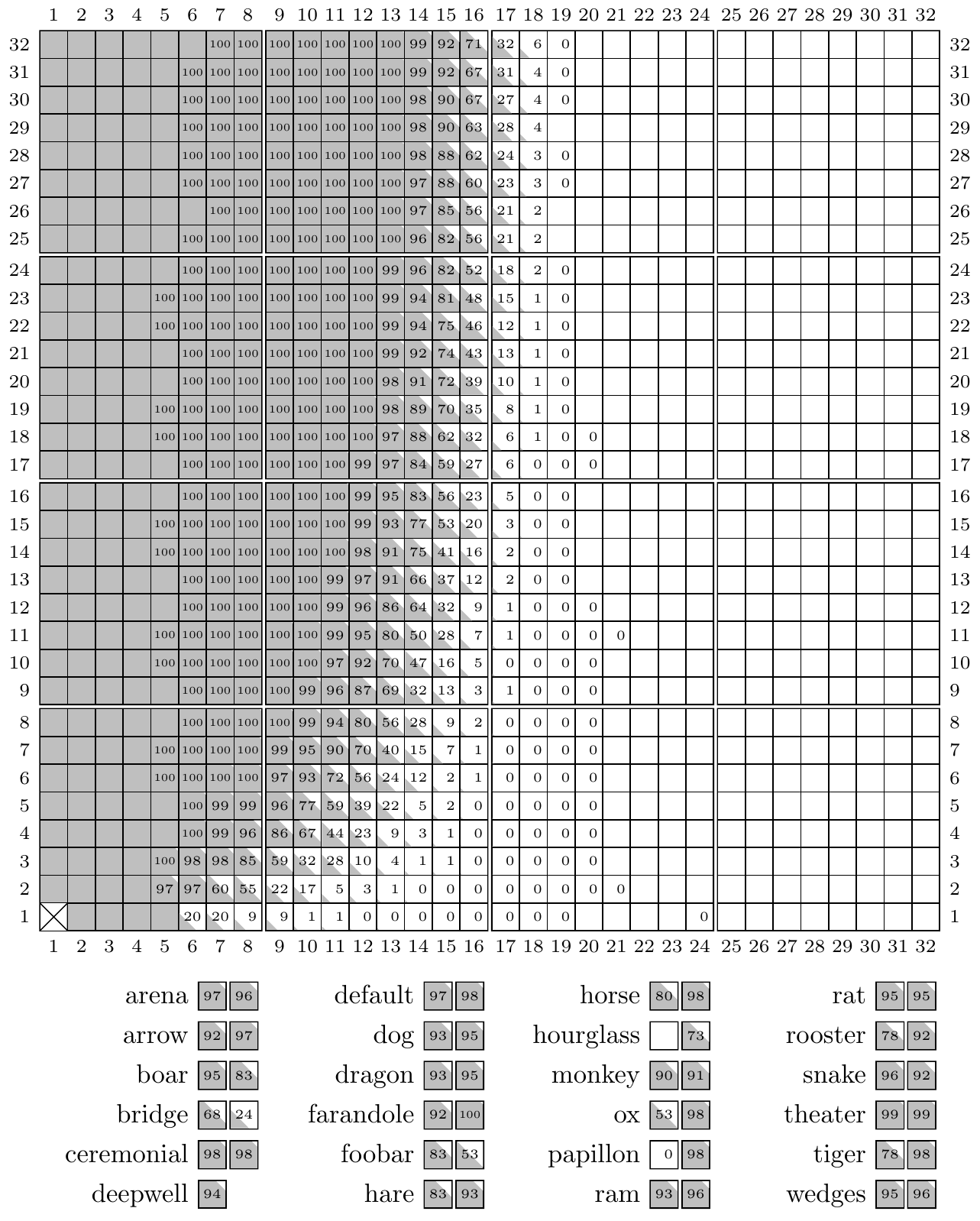}}
\end{figure*}

\subsection{Rectangular layouts}

We sampled rectangular layouts up to $32 \times 32$ inclusive. For
layouts with an odd number of tiles, we removed one corner tile. For the $1 \times 1$ layout, this resulted in a void layout, which we did not sample.
For other layouts with $2$ or $3$ tiles modulo $4$, we used one tile group
with only $2$ equal tiles.

For Shisen-Sho, the sample sizes are given on the left. For Mahjong Solitaire, these sample sizes were too time-consuming, so we adapted them as indicated on the right.
\begin{center}
\begin{tikzpicture}[x=0.5in,y=0.5in]
\draw[fill=black!40] (0.125,0.125) -- (0,0.125) -- 
    (0,1) -- (1,1) -- (1,0) -- (0.125,0) -- cycle;
\draw[fill=black!35] (1,1) -- (0,1) -- 
    (0,2) -- (2,2) -- (2,0) -- (1,0) -- cycle;
\draw[fill=black!30] (2,2) -- (0,2) -- 
    (0,3) -- (3,3) -- (3,0) -- (2,0) -- cycle;
\draw[fill=black!25] (3,3) -- (0,3) -- 
    (0,4) -- (4,4) -- (4,0) -- (3,0) -- cycle;
\draw[anchor=west] (0.1875,0) node[rotate=-90] {$2$};
\draw[anchor=west] (0.9375,0) node[rotate=-90] {$8$};
\draw[anchor=west] (1.9375,0) node[rotate=-90] {$16$};
\draw[anchor=west] (2.9375,0) node[rotate=-90] {$24$};
\draw[anchor=west] (3.9375,0) node[rotate=-90] {$32$};
\draw[overlay,anchor=east] (0,0.1875) node {$2$};
\draw[overlay,anchor=east] (0,0.9375) node {$8$};
\draw[overlay,anchor=east] (0,1.9375) node {$16$};
\draw[overlay,anchor=east] (0,2.9375) node {$24$};
\draw[overlay,anchor=east] (0,3.9375) node {$32$};
\draw (0.5,0.5) node[rotate=-45] {$10^8$};
\draw (0.5,1.5) node[rotate=-45] {$10^7$};
\draw (1.5,0.5) node[rotate=-45] {$10^7$};
\draw (0.5,2.5) node[rotate=-45] {$10^6$};
\draw (2.5,0.5) node[rotate=-45] {$10^6$};
\draw (0.5,3.5) node[rotate=-45] {$10^5$};
\draw (3.5,0.5) node[rotate=-45] {$10^5$};
\begin{scope}[shift={(4.7,0)}]
\draw[fill=black!40] (0.125,0.125) -- (0,0.125) -- 
    (0,1) -- (1,1) -- (1,0) -- (0.125,0) -- cycle;
\draw[fill=black!35] (1,1) -- (0,1) -- 
    (0,2) -- (2,2) -- (2,0) -- (1,0) -- cycle;
\draw[fill=black!30] (3,3) -- (3,0) -- (2,0) -- (2,1.5)
    \foreach \k in {1,...,4} { -- ++(-0.125,0) -- ++ (0,0.125) }
    -- (0,2) -- (0,3) -- cycle;
\draw[fill=black!25] (4,4) -- (4,0) -- (3,0) -- (3,1.5) -- (2,1.5)
    -- (2,2) -- (1.5,2) -- (1.5,3) -- (0,3) -- (0,4) -- cycle;
\draw[fill=black!20] (4,4) -- (4,1.5) -- (3,1.5) -- (3,1.75)
    \foreach \k in {1,...,10} { -- ++(-0.125,0) -- ++ (0,0.125) }
    -- (1.5,3) -- (1.5,4) -- cycle;
\draw[fill=black!15] (4,4) -- (4,1.75) -- (3,1.75)
    -- (3,3) -- (1.75,3) --(1.75,4) -- cycle;
\draw[anchor=west] (0.1875,0) node[rotate=-90] {$2$};
\draw[anchor=west] (0.9375,0) node[rotate=-90] {$8$};
\draw[anchor=west] (1.9375,0) node[rotate=-90] {$16$};
\draw[anchor=west] (2.9375,0) node[rotate=-90] {$24$};
\draw[anchor=west] (3.9375,0) node[rotate=-90] {$32$};
\draw[anchor=east] (1.4375,4) node[rotate=-90] {$12$};
\draw[anchor=east] (1.6875,4) node[rotate=-90] {$14$};
\draw[overlay,anchor=east] (0,0.1875) node {$2$};
\draw[overlay,anchor=east] (0,0.9375) node {$8$};
\draw[overlay,anchor=east] (0,1.9375) node {$16$};
\draw[overlay,anchor=east] (0,2.9375) node {$24$};
\draw[overlay,anchor=east] (0,3.9375) node {$32$};
\draw[overlay,anchor=west] (4,1.4375) node {$12$};
\draw[overlay,anchor=west] (4,1.6875) node {$14$};
\draw (0.5,0.5) node[rotate=-45] {$10^8$};
\draw (0.5,1.5) node[rotate=-45] {$10^7$};
\draw (1.5,0.5) node[rotate=-45] {$10^7$};
\draw (0.5,2.5) node[rotate=-45] {$10^6$};
\draw (2.5,0.5) node[rotate=-45] {$10^6$};
\draw (0.5,3.5) node[rotate=-45] {$10^5$};
\draw (3.5,0.5) node[rotate=-45] {$10^5$};
\draw (2.15,2.15) node[rotate=-45] {$10^5$};
\draw (2.6,2.6) node[rotate=-45] {$10^4$};
\draw (3.5,3.5) node[rotate=-45] {$10^3$};
\end{scope}
\end{tikzpicture}
\end{center}
For Shisen-Sho, the proportion of winnable boards is the same as for the transposed layout. This yielded double sample size for the non-square rectangular layouts. To get double sample size for square layouts as well, 
we simply chose the samples twice as large. So the effective sample sizes are $200\,000$, $2\,000\,000$, $20\,000\,000$ and $200\,000\,000$.

For Mahjongg Solitaire, we doubled the sample size by combining the
sample results with those of the transposed layouts for transposed Mahjongg Solitaire. So the effective sample sizes are $2\,000$, $20\,000$, $200\,000$, $2\,000\,000$, $20\,000\,000$ and $200\,000\,000$. A percentage in the graph indicates that there were both winnable and impossible boards.

\subsection{xmahjongg layouts}

Having a compute cluster this time, we sampled the xmahjongg layouts $200$ times as many as in \cite{dB}. So we sampled the default layout
$2\,000\,000\,000$ times and the other layouts $20\,000\,000$ times each. We sampled the default layout in $20$ threads of $100\,000\,000$. For Mahjong Solitaire, each chunk of $100\,000\,000$ boards took about $3$ weeks. For Shisen-Sho and transposed Mahjong Solitaire, this was less 
than $6$ days and $2$ days respectively.

The graphs for transposed Mahjong Solitaire are at the right of those for Mahjong Solitaire. For the symmetric deepwell layout, we combined the results for Mahjong Solitaire and transposed Mahjong Solitaire, resulting in a sample size of $40\,000\,000$ for Mahjong Solitaire. The foobar layout is just the foo layout, extended with the sample results of the bar layout with transposed matching rules, resulting in a sample size of $40\,000\,000$ for all three game types.

We obtained the following results for the default layout:
\begin{center}
\begin{tabular}{rl}
Mahjong Solitaire: & 2.959 percent impossible \\
Transposed Mahjong Solitaire: & 1.756 percent impossible \\
Shisen-Sho: & 1.906 percent impossible
\end{tabular}
\end{center}
The first $100\,000$ boards of the papillon layout are not winnable, so no winnable boards were found for this layout in \cite{dB}. But $16$ of the $20\,000\,000$ boards are winnable. The hourglass layout has no winnable boards in the sample of $20\,000\,000$ boards.

\end{document}